\newtheorem{theorem}{Theorem}
\newtheorem{lemma}{Lemma}
\theoremstyle{definition}
\newtheorem{definition}{Definition}
\def \R{\mathbb{R}}
\newcommand{\vecc}[1]{\ensuremath{\bm{#1}}}
\newcommand{\alloc}{\negthickspace\negthinspace{^\ast}}
\DeclareMathOperator*{\opt}{OPT}
\DeclareMathOperator*{\makespan}{makespan}
\title{A New Lower Bound for Deterministic Truthful Scheduling\thanks{Supported by the Alexander von Humboldt Foundation with funds from the German Federal Ministry of Education and Research (BMBF). 
Yiannis Giannakopoulos is an associated researcher with the Research Training Group GRK 2201 ``Advanced Optimization in a Networked Economy'', funded by the German Research Foundation (DFG).}
}
\author{
		Yiannis Giannakopoulos\thanks{TU Munich. 
		Emails:
		{\tt
		\{\href{mailto:yiannis.giannakopoulos@tum.de}{\nolinkurl{yiannis.giannakopoulos}},
		\href{mailto:alexander.hammerl@tum.de}{\nolinkurl{alexander.hammerl}},
		\href{mailto:diogo.pocas@tum.de}{\nolinkurl{diogo.pocas}}\}@tum.de 
		}}
	\and
		Alexander Hammerl\footnotemark[2]
	\and
		Diogo Poças\footnotemark[2]
}
\date{July 7, 2020}
\begin{document}

\maketitle

\begin{abstract}
We study the problem of truthfully scheduling $m$ tasks to $n$ selfish unrelated
machines, under the objective of makespan minimization, as was introduced in the
seminal work of Nisan and Ronen [STOC'99]. Closing the current gap of $[2.618,n]$ on
the approximation ratio of deterministic truthful mechanisms is a notorious open
problem in the field of algorithmic mechanism design. We provide the first such
improvement in more than a decade, since the lower bounds of $2.414$ (for $n=3$) and
$2.618$ (for $n\to\infty$) by Christodoulou et al. [SODA'07] and Koutsoupias and
Vidali [MFCS'07], respectively. More specifically, we show that the currently best
lower bound of $2.618$ can be achieved even for just $n=4$ machines; for $n=5$ we
already get the first improvement, namely $2.711$; and allowing the number of
machines to grow arbitrarily large we can get a lower bound of $2.755$.
\end{abstract}

\section{Introduction}\label{sec:intro} Truthful scheduling of unrelated parallel
machines is a prototypical problem in \emph{algorithmic mechanism design},
introduced in the seminal paper of Nisan and Ronen~\cite{Nisan1999a} that
essentially initiated this field of research. It is an extension of the classical
combinatorial problem for the makespan minimization objective (see, e.g.,
\cite[Ch.~17]{Vazirani2001a} or~\cite[Sec.~1.4]{HochbaumHall97}), with the added
twist that now machines are rational, \emph{strategic} agents that would not
hesitate to \emph{lie} about their actual processing times for each job, if this can
reduce their personal cost, i.e., their own completion time. The goal is to design a
scheduling mechanism, using payments as incentives for the machines to
\emph{truthfully} report their true processing costs, that allocates all jobs in
order to minimize the makespan, i.e., the maximum completion time across machines.
 
Nisan and Ronen~\cite{Nisan:2001aa} showed right away that no such truthful
deterministic mechanism can achieve an approximation better than $2$ to the optimum
makespan; this is true even for just $n=2$ machines. It is worth emphasizing that
this lower bound is not conditioned on \emph{any} computational complexity
assumptions; it is purely a direct consequence of the added truthfulness requirement
and holds even for mechanisms that have unbounded computational capabilities. It is
interesting to compare this with the classical (i.e., non-strategic) algorithmic
setting where we do know~\cite{Lenstra1990a} that a $2$-approximate polynomial-time
algorithm does exist and that it is NP-hard to approximate the minimum makespan
within a factor smaller than $\frac{3}{2}$. On the positive side, it is also shown
in~\cite{Nisan:2001aa} that the mechanism that myopically allocates each job to the
machine with the fastest reported time for it, and compensates her with a payment
equal to the report of the second-fastest machine, achieves an approximation ratio
of $n$ (where $n$ is the number of machines); this mechanism is truthful and
corresponds to the paradigmatic VCG mechanism (see, e.g., \cite{Nisan2007a}).

Based on these, Nisan and Ronen~\cite[Conjecture~4.9]{Nisan:2001aa} made the bold
conjecture that their upper bound of $n$ is actually the \emph{tight} answer to the
approximation ratio of deterministic scheduling; more than 20 years after the first
conference version of their paper~\cite{Nisan1999a} though, very little progress has
been made in closing their gap of $[2,n]$. Thus, the \emph{Nisan-Ronen conjecture}
remains up to this day one of the most important open questions in algorithmic
mechanism design. Christodoulou et al.~\cite{Christodoulou2007b} improved the lower
bound to $1+\sqrt{2}\approx 2.414$, even for instances with only $n=3$ machines and,
soon after, Koutsoupias and Vidali~\cite{KV07} showed that by allowing $n\to\infty$
the lower bound can be increased to $1+\phi\approx 2.618$. The journal versions of
these papers can be found at~\cite{CKV09} and~\cite{KV13}, respectively. In our
paper we provide the first improvement on this lower bound in well over a decade.

Another line of work tries to provide better lower bounds by imposing further
assumptions on the mechanism, in addition to truthfulness. Most notably, Ashlagi et
al.~\cite{ADL12} were actually able to resolve the Nisan-Ronen conjecture for the
important special case of \emph{anonymous} mechanisms, by providing a lower bound of
$n$. The same can be shown for mechanisms with strongly-monotone allocation
rules~\cite[Sec.~3.2]{Mualem:2018aa} and for mechanisms with additive or local
payment rules~\cite[Sec.~4.3.3]{Nisan:2001aa}.

Better bounds have also been achieved by modifying the scheduling model itself. For
example, Lavi and Swamy~\cite{LS09} showed that if the processing times of all jobs
can take only two values (``high'' and ``low'') then there exists a $2$-approximate
truthful mechanism; they also give a lower bound of $\frac{11}{10}$. Very recently,
Christodoulou et al.~\cite{Christodoulou2020} showed a lower bound of
$\varOmega(\sqrt{n})$ for a slightly generalized model where the completion times of
machines are allowed to be submodular functions (of the costs of the jobs assigned to
them) instead of additive in the standard setting.

Although in this paper we focus exclusively on deterministic mechanisms,
randomization is also of great interest and has attracted a significant amount of
attention~\cite{Nisan:2001aa,Mualem:2018aa,Y09}, in particular the two-machine
case~\cite{Lu2008a,Lu2008b,L09,Chen2015,Kuryatnikova2019}. The currently best
general lower bound on the approximation ratio of randomized (universally) truthful
mechanisms is $2-\frac{1}{n}$~\cite{Mualem:2018aa}, while the upper one is
$0.837n$~\cite{Lu2008b}. For the more relaxed notion of \emph{truthfulness in
expectation}, the upper bound is $\frac{n+5}{2}$~\cite{Lu2008a}. Related to the
randomized case is also the fractional model, where mechanisms (but also the optimum
makespan itself) are allowed to split jobs among machines. For this case,
\cite{Christodoulou2010b} prove lower and upper bounds of $2-\frac{1}{n}$ and
$\frac{n+1}{2}$, respectively; the latter is also shown to be tight for
task-independent mechanisms.

Other variants of the strategic unrelated machine scheduling problem that have been
studied include the Bayesian model~\cite{Chawla2013a,Daskalakis2015b,gkyr2015}
(where job costs are drawn from probability distributions), scheduling without
payments~\cite{Koutsoupias2014a,GKK2016journal} or with
verification~\cite{Nisan:2001aa,Penna2014,Ventre2014}, and strategic behaviour
beyond (dominant-strategy) truthfulness~\cite{fgl2019}. The \emph{related} machines
model, which is essentially a single-dimensional mechanism design variant of our
problem, has of course also been well-studied (see, e.g.,
\cite{Archer2001a,DDDR11,Auletta2009}) and a deterministic PTAS
exists~\cite{Christodoulou2013a}.

\subsection{Our Results and Techniques}\label{sec:results}

We present new lower bounds on the approximation ratio of deterministic truthful
mechanisms for the prototypical problem of scheduling unrelated parallel machines,
under the makespan minimization objective, introduced in the seminal work of Nisan
and Ronen~\cite{Nisan:2001aa}. Our main result (\cref{th:main-lower-bound}) is a
bound of $\rho\approx 2.755$, where $\rho$ is the solution of the cubic
equation~\eqref{eq:lower-bound-infinity}. This improves upon the lower bound of
$1+\phi\approx 2.618$ by Koutsoupias and Vidali~\cite{KV13} which appeared well over
a decade ago~\cite{KV07}. Similar to~\cite{KV13}, we use a family of instances with the number
of machines growing arbitrarily large ($n\to\infty$).

Furthermore, our construction (see~\cref{sec:opt-forumlation}) provides improved
lower bounds also \emph{pointwise}, as a function of the number of machines $n$ that
we are allowed to use. More specifically, for $n=3$ we recover the bound of
$1+\sqrt{2}\approx 2.414$ by~\cite{CKV09}. For $n=4$ we can already match the
$2.618$ bound that~\cite{KV13} could achieve only in the limit as $n\to\infty$. The
first strict improvement, namely $2.711$, comes from $n=5$. As the number of
machines grows, our bound converges to $2.755$. Our results are summarized
in~\cref{table:lower-bounds}.

A central feature of our approach is the formulation of our lower bound as the
solution to a (non-linear) optimization programme~\eqref{eq:nonlinopt}; we then
provide optimal, analytic solutions to it for all values of $n\geq 3$
(\cref{lem:opt-solution}). It is important to clarify here that, in principle, just
giving \emph{feasible} solutions to this programme would still suffice to provide
valid lower bounds for our problem. However, the fact that we pin down and use the
actual \emph{optimal} ones gives rise to an interesting implication: our lower
bounds are provably the best ones that can be derived using our construction.

There are two key elements that allow us to derive our improved bounds, compared to
the approach in previous related works~\cite{CKV09,KV13}. First, we deploy the
weak-monotonicity (\cref{th:wmon}) characterization of truthfulness in a slightly
more delicate way; see~\cref{lem:stv}. This gives us better control and flexibility
in considering deviating strategies for the machines (see our case-analysis
in~\cref{sec:lower-bound}). Secondly, we consider more involved instances, with two
auxiliary parameters (namely $r$ and $a$; see, e.g., \eqref{eq:costmatrixa1}
and~\eqref{eq:randa}) instead of just one. On the one hand, this increases the
complexity of the solution, which now has to be expressed in an implicit way via the
aforementioned optimization programme~\eqref{eq:nonlinopt}. But at the same time,
fine-tuning the optimal choice of the variables allows us to (provably) push our
technique to its limits. Finally, let us mention that, for a small number of
machines ($n=3,4,5$) we get $r=1/a$ in an optimal choice of parameters. Under
$r=1/a$, we end up with $a$ as the only free parameter, and our construction becomes
closer to that of~\cite{CKV09,KV13}; in fact, for $3$ machines it is essentially the
same construction as in \cite{CKV09} (which explains why we recover the same lower
bound). However, for $n\geq 6$ machines we need a more delicate choice of $r$.

\begin{table}[t]
$$
\begin{array}{l  c c c c c c c c c}  
\toprule
n    & 3 & 4 & 5 & 6 & 7 & 8 & \dots & \infty  \\
\midrule
\text{Previous work}      & 2.414 & 2.465 & 2.534 & 2.570 & 2.590 & 2.601 & \dots & 2.618      \\
\text{This paper}         & 2.414 & 2.618 & 2.711 & 2.739 & 2.746 & 2.750 & \dots & 2.755     \\
\bottomrule
\end{array}
$$
\caption{Lower bounds on the approximation ratio of deterministic truthful scheduling, as a
function of the number of machines $n$, given by our~\cref{th:main-lower-bound}
(bottom line). The previous state-of-the-art is given in the line above and first
appeared in~\cite{Christodoulou2007b} ($n=3$) and \cite{KV07} ($n\geq 4$). The case
with $n=2$ machines was completely resolved in \cite{Nisan1999a}, with an
approximation ratio of $2$.}
\label{table:lower-bounds}
\end{table}

\section{Notation and Preliminaries} \label{sec:prelims}

Before we go into the construction of our lower bound (\cref{sec:lower-bound}), we
use this section to introduce basic notation and recall the notions of mechanism,
truthfulness, monotonicity, and approximation ratio. We also provide a technical
tool (\cref{lem:stv}) that is a consequence of weak monotonicity (\cref{th:wmon});
this lemma will be used several times in the proof of our main result.

\subsection{Unrelated Machine Scheduling}

In the unrelated machine scheduling setting, we have a number $n$ of machines and a
number $m$ of tasks to allocate to these machines. These tasks can be performed in
any order, and each task has to be assigned to exactly one machine; machine $i$
requires $t_{ij}$ units of time to process task $j$. Hence, the complete description
of a problem instance can be given by a $n\times m$ \emph{cost matrix} of the values
$t_{ij}$, which we denote by $\vecc{t}$. In this matrix, row $i$, denoted by
$\vecc{t}_i$, represents the processing times for machine $i$ (on the different
tasks) and column $j$, denoted by $\vecc{t}_j$, represents the processing times for
task $j$ (on the different machines). These values $t_{ij}$ are assumed to be
nonnegative real quantities, $t_{ij}\in\R_{+}$.

Applying the methodology of mechanism design, we assume that the processing times
for machine $i$ are known only by machine $i$ herself. Moreover, machines are
selfish agents; in particular, they are not interested in running a task unless they
receive some compensation for doing so. They may also lie about their processing
times if this would benefit them. This leads us to consider the central notion of
(direct-revelation) \emph{mechanisms}: each machine reports her values, and a
mechanism decides on an allocation of tasks to machines, as well as corresponding
payments, based on the reported values.

\begin{definition}[Allocation rule, payment rule, mechanism]\label{def:mechanism}
Given $n$ machines and $m$ tasks,
\begin{itemize}
	\item a (deterministic) \emph{allocation rule} is a function that describes the
	allocation of tasks to machines for each problem instance. Formally, it is
	represented as a function $\vecc{a}:\R_{+}^{n\times m}\rightarrow
	\{0,1\}^{n\times m}$ such that, for every $\vecc{t}=(t_{ij})\in \R_{+}^{n\times
	m}$ and every task $j=1,\ldots, m$, there is exactly one machine $i$ with
	$a_{ij}(\vecc{t})=1$, that is,
	\begin{equation}\label{eq:alloc}\sum_{i=1}^n a_{ij}(\vecc{t})=1;\end{equation}
	\item a \emph{payment rule} is a function that describes the payments to
	machines for each problem instance. Formally, it is represented as a function
	$p:\R_{+}^{n\times m}\rightarrow\R^n$;
	\item a (direct-revelation, deterministic) \emph{mechanism} is a pair
	$(\vecc{a},p)$ consisting of an allocation and payment rules.
\end{itemize}
\end{definition}

We let $\mathbb{A}$ denote the set of feasible allocations, that is, matrices
$\vecc{a}=(a_{ij})\in\{0,1\}^{n\times m}$ satisfying \eqref{eq:alloc}. Given a
feasible allocation $\vecc{a}$, we let $\vecc{a}_i$ denote its row $i$, that is, the
allocation to machine $i$. Similarly, given a payment vector $\vecc{p}\in\R^n$, we
let $p_i$ denote the payment to machine $i$; note that the payments represent an
amount of money given to the machine, which is somewhat the opposite situation
compared to other mechanism design frameworks (such as auctions, where payments are
done by the agents to the mechanism designer).

\subsection{Truthfulness and Monotonicity}

Whenever a mechanism assigns an allocation $\vecc{a}_i$ and a payment $p_i$ to
machine $i$, this machine incurs a quasi-linear \emph{utility} equal to her payment
minus the sum of processing times of the tasks allocated to her,
\[p_i-\vecc{a}_i\cdot\vecc{t}_i=p_i-\sum_{j=1}^m a_{ij}t_{ij}.\]

Note that the above quantity depends on the machine's both \emph{true} and \emph{reported} processing times,
which in principle might differ. As
already explained, machines behave selfishly. Thus, from the point of view of a
mechanism designer, we wish to ensure a predictable behaviour of all parties
involved. In particular, we are only interested in mechanisms that encourage agents
to report their true valuations.

\begin{definition}[Truthful mechanism]\label{def:truthful}
 A mechanism $(\vecc{a},p)$ is \emph{truthful} if every machine maximizes their
 utility by reporting truthfully, regardless of the reports by the other machines.
 Formally, for every machine $i$, every $\vecc{t}_i,\vecc{t}'_i\in \R_{+}^m$,
 $\vecc{t}_{-i}\in R_{+}^{(n-1)\times m}$, we have that
 \begin{equation}\label{eq:def_truthfulness}p_i(\vecc{t}_i,\vecc{t}_{-i})-\vecc{a}_i(\vecc{t}_i,\vecc{t}_{-i})\cdot\vecc{t}_i\geq
 p_i(\vecc{t}'_i,\vecc{t}_{-i})-\vecc{a}_i(\vecc{t}'_i,\vecc{t}_{-i})\cdot\vecc{t}_i.\tag{TR}\end{equation}
\end{definition}

In \eqref{eq:def_truthfulness}, we ``freeze'' the reports of all machines other than
$i$. The left hand side corresponds to the utility achieved by machine $i$ when her
processing times correspond to $\vecc{t}_i$ and she truthfully reports $\vecc{t}_i$.
The right hand side corresponds to the utility achieved if machine $i$ lies and
reports $\vecc{t}'_i$.

The most important example of a truthful mechanism in this setting is the VCG
mechanism that assigns each task independently to the machine that can perform it
fastest, and paying that machine (for that task) a value equal to the second-lowest
processing time. Note that this is somewhat the equivalent of second-price auctions
(that sell each item independently) for the scheduling setting.

A fundamental result in the theory of mechanism design is the very useful property
of truthful mechanisms, in terms of ``local'' monotonicity of the allocation
function with respect to single-machine deviations.

\begin{theorem}[Weak monotonicity~\cite{Nisan:2001aa,LS09}]
\label{th:wmon} 
Let $\vecc{t}$ be a cost matrix, $i$ be a machine, and $\vecc{t}'_i$
another report from machine $i$. Let $\vecc{a}_i$ be the allocation of $i$ for cost
matrix $\vecc{t}$ and $\vecc{a}'_i$ be the allocation of $i$ for cost matrix
$(\vecc{t}'_i,\vecc{t}_{-i})$. Then, if the mechanism is truthful, it must be that
\begin{equation}\label{eq:truthfulness}
(\vecc{a}_i-\vecc{a}'_i)\cdot(\vecc{t}_i-\vecc{t}'_i)\leq 0.
\tag{WMON}
\end{equation}
\end{theorem}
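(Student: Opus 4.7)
The plan is to derive \eqref{eq:truthfulness} directly from two applications of the truthfulness condition \eqref{eq:def_truthfulness}, one for each ``direction'' of deviation between the reports $\vecc{t}_i$ and $\vecc{t}'_i$, with the remaining machines' reports $\vecc{t}_{-i}$ held fixed throughout. This is the standard ``swap and add'' argument for monotonicity in quasi-linear settings, so I expect no real obstacle beyond bookkeeping.

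First, I would introduce the shorthand $p_i := p_i(\vecc{t}_i,\vecc{t}_{-i})$ and $p'_i := p_i(\vecc{t}'_i,\vecc{t}_{-i})$ for the payments at the two profiles, so that $\vecc{a}_i$ and $\vecc{a}'_i$ denote the corresponding allocations to machine $i$ as in the statement. Then I would instantiate \eqref{eq:def_truthfulness} in two ways: once assuming machine $i$'s true processing times are $\vecc{t}_i$ and she contemplates misreporting $\vecc{t}'_i$, which yields
\[
p_i - \vecc{a}_i\cdot\vecc{t}_i \;\geq\; p'_i - \vecc{a}'_i\cdot\vecc{t}_i,
\]
and once assuming her true processing times are $\vecc{t}'_i$ and she contemplates misreporting $\vecc{t}_i$, which yields
\[
p'_i - \vecc{a}'_i\cdot\vecc{t}'_i \;\geq\; p_i - \vecc{a}_i\cdot\vecc{t}'_i.
\]

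The final step is to add the two inequalities. The payment terms $p_i$ and $p'_i$ cancel on both sides, and rearranging the surviving dot-product terms yields
\[
\vecc{a}_i\cdot\vecc{t}_i + \vecc{a}'_i\cdot\vecc{t}'_i \;\leq\; \vecc{a}'_i\cdot\vecc{t}_i + \vecc{a}_i\cdot\vecc{t}'_i,
\]
which is exactly $(\vecc{a}_i - \vecc{a}'_i)\cdot(\vecc{t}_i - \vecc{t}'_i)\leq 0$, as required. The only subtlety worth flagging is that both truthfulness inequalities are legitimate: \eqref{eq:def_truthfulness} is quantified over \emph{all} pairs $(\vecc{t}_i,\vecc{t}'_i)$ and all fixed $\vecc{t}_{-i}$, so swapping the roles of the true and reported types in the second application is justified without any additional hypothesis.
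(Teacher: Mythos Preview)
Your argument is correct and is exactly the standard derivation of weak monotonicity from the two truthfulness inequalities. Note, however, that the paper does not actually supply its own proof of \cref{th:wmon}; it is stated with citations to~\cite{Nisan:2001aa,LS09} and used as a black box, so there is no in-paper proof to compare against.
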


As a matter of fact, \eqref{eq:truthfulness} is also a \emph{sufficient} condition
for truthfulness, thus providing an exact \emph{characterization} of
truthfulness~\cite{Saks2005a}. However, for our purposes in this paper we will only
need the direction in the statement of~\cref{th:wmon} as stated above. We will make
use of the following lemma, which exploits the notion of weak monotonicity in a
straightforward way. The second part of this lemma can be understood as a refinement
of a technical lemma that appeared before in~\cite[Lemma 2]{CKV09} (see
also~\cite[Lemma 1]{KV13}).

\begin{lemma}
\label{lem:stv}
Suppose that machine $i$ changes her report from $\vecc{t}$ to $\vecc{t}'$, and that
a truthful mechanism correspondingly changes her allocation from $\vecc{a}_i$ to
$\vecc{a}'_i$. Let $\{1,\ldots,m\}=S\cup T\cup V$ be a partition of the tasks into
three disjoint sets.
\begin{enumerate}
\item Suppose that (a) the costs of $i$ on $V$ do not change, that is,
$\vecc{t}_{i,V}=\vecc{t}'_{i,V}$ and (b) the allocation of $i$ on $S$ does not
change, that is, $\vecc{a}_{i,S}=\vecc{a}'_{i,S}$. Then
\[(\vecc{a}_{i,T}-\vecc{a}'_{i,T})\cdot(\vecc{t}_{i,T}-\vecc{t}'_{i,T})\leq 0.\]
\item Suppose additionally that (c) the costs of $i$ strictly decrease on her
allocated tasks in $T$ and strictly increase on her unallocated tasks in $T$. Then
her allocation on $T$ does not change, that is, $\vecc{a}_{i,T}=\vecc{a}'_{i,T}$.
\end{enumerate}
\end{lemma}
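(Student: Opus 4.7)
The plan is a direct application of weak monotonicity (\cref{th:wmon}) followed by some term-by-term bookkeeping. I apply \eqref{eq:truthfulness} to the deviation from $\vecc{t}_i$ to $\vecc{t}'_i$ and split the inner product $(\vecc{a}_i-\vecc{a}'_i)\cdot(\vecc{t}_i-\vecc{t}'_i)$ along the partition $S\cup T\cup V$. By assumption (a), every summand indexed by $j\in V$ has $t_{ij}-t'_{ij}=0$; by assumption (b), every summand indexed by $j\in S$ has $a_{ij}-a'_{ij}=0$. So the full sum collapses to the part indexed by $T$, namely $(\vecc{a}_{i,T}-\vecc{a}'_{i,T})\cdot(\vecc{t}_{i,T}-\vecc{t}'_{i,T})$, and combining with the \eqref{eq:truthfulness} inequality $\leq 0$ yields Part 1.

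For Part 2 I take this $T$-restricted inequality and argue that each of its summands is in fact \emph{nonnegative}. Partition $T$ into the subset $T_1=\{j\in T:a_{ij}=1\}$ and its complement $T_0=T\setminus T_1$. On $T_1$, hypothesis (c) gives $t_{ij}-t'_{ij}>0$ while $a_{ij}-a'_{ij}=1-a'_{ij}\in\{0,1\}$, so the $j$-th term is $\geq 0$; on $T_0$, (c) gives $t_{ij}-t'_{ij}<0$ while $a_{ij}-a'_{ij}=-a'_{ij}\in\{-1,0\}$, so again the $j$-th term is $\geq 0$. A sum of nonnegative numbers that is $\leq 0$ must have every summand equal to zero, and since the cost factor $t_{ij}-t'_{ij}$ is nonzero on every $j\in T$, this forces $a'_{ij}=a_{ij}$ for every $j\in T$, i.e.\ $\vecc{a}'_{i,T}=\vecc{a}_{i,T}$.

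There is no real obstacle here beyond tracking signs carefully. The one subtle point is that the \emph{strict} inequalities in (c) are essential for Part 2: without strictness, a vanishing product $(a_{ij}-a'_{ij})(t_{ij}-t'_{ij})=0$ would be compatible with changing $a'_{ij}$ on a task whose cost is accidentally unchanged. Apart from this, both parts are pure bookkeeping on top of \cref{th:wmon}.
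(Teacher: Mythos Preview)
Your proof is correct and follows essentially the same approach as the paper's own argument: both apply \eqref{eq:truthfulness}, split the inner product along $S\cup T\cup V$, kill the $S$- and $V$-parts via assumptions (a) and (b), and then for Part~2 check term-by-term that each summand in the $T$-part is nonnegative before concluding that all must vanish. Your additional remark on the necessity of strict inequalities in (c) is a useful clarification not explicitly spelled out in the paper.
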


\begin{proof}
To prove the first point, simply apply \eqref{eq:truthfulness} and split the sum into the three sets of tasks,
\begin{align*}
0&\geq(\vecc{a}_i-\vecc{a}'_i)\cdot(\vecc{t}_i-\vecc{t}'_i)\\
&=(\vecc{a}_{i,S}-\vecc{a}'_{i,S})\cdot(\vecc{t}_{i,S}-\vecc{t}'_{i,S})+(\vecc{a}_{i,T}-\vecc{a}'_{i,T})\cdot(\vecc{t}_{i,T}-\vecc{t}'_{i,T})+(\vecc{a}_{i,V}-\vecc{a}'_{i,V})\cdot(\vecc{t}_{i,V}-\vecc{t}'_{i,V});\end{align*}
since $\vecc{t}_{i,V}=\vecc{t}'_{i,V}$ and $\vecc{a}_{i,S}=\vecc{a}'_{i,S}$, the
result follows.

To prove the second point, we look at each term appearing in the inner product
$(\vecc{a}_{i,T}-\vecc{a}'_{i,T})\cdot(\vecc{t}_{i,T}-\vecc{t}'_{i,T})$. Let $j\in
T$ be a task which was originally allocated to machine $i$; then, $a_{i,j}=1$ and,
by assumption, $t_{i,j}>t'_{i,j}$. Since $a'_{i,j}$ is either $1$ or $0$, it follows
that $(a_{i,j}-a'_{i,j})(t_{i,j}-t'_{i,j})$ is either $0$ (if the allocation does
not change) or $t_{i,j}-t'_{i,j}>0$ (if the allocation changes). Similarly, assume
now that $j\in T$ was originally not allocated to machine $i$; then, $a_{i,j}=0$
and, by assumption, $t_{i,j}<t'_{i,j}$. Since $a'_{i,j}$ is either $0$ or $1$, it
follows that $(a_{i,j}-a'_{i,j})(t_{i,j}-t'_{i,j})$ is either $0$ (if the allocation
does not change) or $(-1)\cdot(t_{i,j}-t'_{i,j})>0$ (if the allocation changes). By
the first point, the sum over all these terms must be non-positive. We conclude that
all these terms must be zero, and hence, the allocation of machine $i$ for tasks on
$T$ must not change.
\end{proof}

\subsection{Approximation ratio}

One of the main open questions in the theory of algorithmic mechanism design is to figure out
what is the ``best'' possible truthful mechanism, with respect to the objective of makespan minimization. This can be quantified in terms of the approximation ratio of a mechanism.

\begin{definition} Given $n$ machines and $m$ tasks:
\begin{itemize}
	\item Let $\vecc{a}$ be a feasible allocation and $\vecc{t}$ a problem instance. The \emph{makespan} of $\vecc{a}$ on $\vecc{t}$ is defined as the quantity
	\begin{equation*}\makespan(\vecc{a},\vecc{t})=\max_{i=1,\ldots,n}\sum_{j=1}^m a_{ij}t_{ij}.\end{equation*}
	\item Let $\vecc{t}$ be a problem instance. The \emph{optimal makespan} is defined as the quantity
	\begin{equation*}\opt(\vecc{t})=\min_{\vecc{a}\in \mathbb{A}}\makespan(\vecc{a},\vecc{t}).\end{equation*}
	\item Let $\vecc{a}$ be an allocation rule. We say that $\vecc{a}$ has \emph{approximation ratio} $\rho\geq 1$ if, for any problem instance $\vecc{t}$, we have that
	\[\makespan(\vecc{a}(\vecc{t}),\vecc{t})\leq\rho\opt(\vecc{t});\]
	if no such quantity $\rho$ exists, we say that $\vecc{a}$ has \emph{infinite approximation ratio}.
\end{itemize}
\end{definition}

As shown in \cite{Nisan:2001aa}, the VCG mechanism has an approximation ratio
of $n$, the number of machines. The long-standing conjecture by Nisan and Ronen states that this mechanism is essentially the best one; any truthful mechanism
is believed to attain a worst-case approximation ratio of $n$ (for sufficiently many tasks). In this paper, we prove lower bounds on the approximation ratio of any
truthful mechanism (\cref{table:lower-bounds,th:main-lower-bound}); our bounds converge to $2.755$ as $n\rightarrow\infty$.

\section{Lower Bound}
\label{sec:lower-bound}

To prove our lower bound, from here on we assume $n\geq 3$ machines, since the case
$n=1$ is trivial and the case $n=2$ is resolved by \cite{Nisan1999a} (with an
approximation ratio of $2$). Our construction will be made with the choice of two
parameters $r,a$. For now we shall simply assume that $a>1>r>0$. Later we will
optimize the choices of $r$ and $a$ in order to achieve the best lower bound
possible by our construction.

We will use $L_n$ to denote the $n\times n$ matrix with $0$ in its diagonal and
$\infty$ elsewhere,

\[L_n=\left[
\begin{array}{cccc}
0&\infty&\cdots&\infty\\
\infty & 0 & \cdots & \infty\\
\vdots & \vdots &\ddots & \vdots\\
\infty & \infty &\cdots & 0
\end{array}
\right].\]

We should mention here that allowing $t_{ij}=\infty$ is a technical convenience. If
only finite values are allowed, we can replace $\infty$ by an arbitrarily high
value. We also follow the usual convention, and use an asterisk $^\ast$ to denote a
full or partial allocation. Our lower bound begins with the following cost matrix
for $n$ machines and $2n-1$ tasks:

\begin{equation}
A_0=\left[\begin{array}{c|cccccc}
\multirow{7}{*}{\raisebox{-8pt}{$L_n$}}& \alloc 1 & 1 & a^{-1} & a^{-2} & \cdots & a^{-n+3}\\
& 1 & 1 & a^{-1} & a^{-2} & \cdots & a^{-n+3}\\
& \infty & 1 & \infty & \infty & \cdots & \infty\\
& \infty & \infty & a^{-1} &\infty & \cdots & \infty\\
& \infty & \infty & \infty & a^{-2} & \ddots & \infty\\
& \vdots & \vdots & \vdots & \ddots & \ddots & \vdots\\
& \infty & \infty & \infty & \infty & \cdots & a^{-n+3}\end{array}\right].\label{eq:costmatrixa0}
    \end{equation}
The tasks of cost matrix $A_0$ can be partitioned in two groups. The first $n$ tasks
(i.e., the ones corresponding to the $L_n$ submatrix) will be called \emph{dummy}
tasks. Machine $i$ has a cost of $0$ for dummy task $i$ and a cost of $\infty$ for
all other dummy tasks. The second group of tasks, numbered $n+1,\ldots, 2n-1$, will
be called \emph{proper} tasks. Notice that machines $1$ and $2$ have the same costs
for proper tasks; they both need time $1$ to execute task $n+1$ and time $a^{-j+2}$
to execute task $n+j$, for all $j=2,\ldots n-1$. Finally for $i\geq 3$, machine $i$
has a cost of $a^{-i+3}$ on proper task $n+i-1$ and $\infty$ cost for all other
proper tasks.

In order for a mechanism to have a finite approximation ratio, it must not assign
any tasks with unbounded costs. In particular, each dummy task must be assigned to
the unique machine that completes it in time $0$; and proper task $n+1$ must be
assigned to either machine $1$ or $2$. Since the costs of machines $1$ and $2$ are
the same on all proper tasks, we can without loss assume that machine $1$ receives
proper task $n+1$. Hence, the allocation on $A_0$ should be as (designated by an
asterisk) in \eqref{eq:costmatrixa0}.

Next, we reduce the costs of all proper tasks for machine $1$, and get the cost matrix

\begin{equation}
A_1=\left[\begin{array}{c|cccccc}
\multirow{7}{*}{\raisebox{-8pt}{$L_n$}}& r & a^{-1} & a^{-2} & a^{-3} & \cdots & a^{-n+2}\\
& 1 & 1 & a^{-1} & a^{-2} & \cdots & a^{-n+3}\\
& \infty & 1 & \infty & \infty & \cdots & \infty\\
& \infty & \infty & a^{-1} &\infty & \cdots & \infty\\
& \infty & \infty & \infty & a^{-2} & \ddots & \infty\\
& \vdots & \vdots & \vdots & \ddots & \ddots & \vdots\\
& \infty & \infty & \infty & \infty & \cdots & a^{-n+3}\end{array}\right].\label{eq:costmatrixa1}
\end{equation}
Under the new matrix $A_1$, the cost of machine $1$ for proper task $n+1$ is reduced
from $1$ to $r$; and her cost for any other proper task $n+j$, $j=2,\ldots,n-1$, is
reduced by a factor of $a$, that is, from $a^{-j+2}$ to $a^{-j+1}$. The key idea in
this step is the following: we want to impose a constraint on $r$ and $a$ that
ensures that \emph{at least one} of the proper tasks $n+1,n+2$ is still allocated to
machine $1$. Using the properties of truthfulness, this can be achieved via the
following lemma:

\begin{lemma}
\label{lem:randa}
Consider a truthful scheduling mechanism that, on cost matrix $A_0$, assigns proper
task $n+1$ to machine $1$. Suppose also that
\begin{equation}\label{eq:randa}
1-r > a^{-1}-a^{-n+2}.
\end{equation}
Then, on cost matrix $A_1$, machine $1$ must receive at least one of the proper
tasks $n+1,n+2$.
\end{lemma}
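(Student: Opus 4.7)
The plan is to argue by contradiction using weak monotonicity (\cref{th:wmon}). Assume that on $A_1$ machine $1$ receives neither proper task $n+1$ nor $n+2$, and let $\vecc{a}_1$ and $\vecc{a}'_1$ denote machine $1$'s allocations on $A_0$ and $A_1$, respectively. By hypothesis $a_{1,n+1}=1$ while $a'_{1,n+1}=a'_{1,n+2}=0$. Machine $1$'s cost row agrees between $A_0$ and $A_1$ on the $n$ dummy tasks, so the dummy contribution to \eqref{eq:truthfulness} vanishes and the inequality collapses to
\[\sum_{j=n+1}^{2n-1}(a_{1,j}-a'_{1,j})(t_{1,j}-t'_{1,j})\leq 0.\]

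Next I would substitute the explicit cost differences: on task $n+1$ the drop is $1-r$, and on task $n+j$ for $j=2,\ldots,n-1$ the drop is $a^{-j+2}-a^{-j+1}>0$ (using $a>1$). Plugging in $a_{1,n+1}-a'_{1,n+1}=1$ and separating out the $j=2$ term, the inequality rewrites as
\[(1-r)+a_{1,n+2}(1-a^{-1})+\sum_{j=3}^{n-1}(a_{1,n+j}-a'_{1,n+j})(a^{-j+2}-a^{-j+1})\leq 0.\]
The middle term is nonnegative, and each summand in the remaining sum is at least $-(a^{-j+2}-a^{-j+1})$, since every allocation difference lies in $\{-1,0,1\}$ and the cost gaps are positive. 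This lower bound telescopes to $-(a^{-1}-a^{-n+2})$.

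Combining everything forces $1-r\leq a^{-1}-a^{-n+2}$, directly contradicting the hypothesis~\eqref{eq:randa}. There is no real obstacle beyond this bookkeeping; the key conceptual point is that, because \emph{every} entry of machine $1$'s cost row is reduced from $A_0$ to $A_1$, the most ``forgiving'' scenario for weak monotonicity is the one in which all proper tasks $n+3,\ldots,2n-1$ that machine $1$ did not hold in $A_0$ migrate onto her in $A_1$, and even then the drop $1-r$ on task $n+1$ cannot be absorbed. As a sanity check, for $n=3$ the telescoping sum is empty and the hypothesis reduces to $r<1$, so the contradiction is immediate.
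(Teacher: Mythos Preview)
Your proof is correct and follows essentially the same argument as the paper's own proof: contradiction via weak monotonicity, the dummy tasks drop out, the $(1-r)$ term from task $n+1$ survives, the $n+2$ term is bounded below by $0$, and the remaining terms are bounded below by $-1$ times a telescoping sum yielding $-(a^{-1}-a^{-n+2})$. The only cosmetic difference is that the paper routes the first step through part~1 of \cref{lem:stv} (with $S=\emptyset$, $V$ the dummy tasks, $T$ the proper tasks) rather than invoking \eqref{eq:truthfulness} directly, but this amounts to the same computation.
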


\begin{proof}
We apply part 1 of \cref{lem:stv}, taking $S=\emptyset$, $V$ as the set of dummy
tasks, and $T$ as the set of proper tasks. If $\vecc{a}_1$, $\vecc{a}'_1$ denote the
allocations of machine $1$ for cost matrices $A_0$, $A_1$ respectively, we get that
\[(\vecc{a}_{1,T}-\vecc{a}'_{1,T})\cdot(\vecc{t}_{1,T}-\vecc{t}'_{1,T})\leq 0.\]
Assume further, for the sake of obtaining a contradiction, that on cost matrix
$A_1$, machine 1 does not get either task $n+1$ or $n+2$; that is,
$a'_{1,n+1}=a'_{1,n+2}=0$. Notice that $a_{1,n+1}=1$ (since machine 1 gets task
$n+1$ on cost matrix $A_0$) and we have the lower bounds $a_{1,n+2}\geq 0$ as well
as $a_{1,n+j}-a'_{1,n+j}\geq -1$ for $j=3,\ldots,n-1$. Combining all these, we get
\begin{align*}0&\geq(\vecc{a}_{1,T}-\vecc{a}'_{1,T})\cdot(\vecc{t}_{1,T}-\vecc{t}'_{1,T})\\
&=(a_{1,n+1}-a'_{1,n+1})(t_{1,n+1}-t'_{1,n+1})+(a_{1,n+2}-a'_{1,n+2})(t_{1,n+2}-t'_{1,n+2})\\
&\qquad+(a_{1,n+3}-a'_{1,n+3})(t_{1,n+3}-t'_{1,n+3})+\ldots+(a_{1,2n-1}-a'_{1,2n-1})(t_{1,2n-1}-t'_{1,2n-1})\\
&\geq 1\cdot(1-r)+0\cdot(1-a^{-1})+(-1)\cdot(a^{-1}-a^{-2})+\ldots+(-1)\cdot(a^{-n+3}-a^{-n+2})\\
&=1-r-a^{-1}+a^{-n+2},
\end{align*}
where in the last step we observe that the terms for tasks $n+3,\ldots,2n-1$ form a telescoping sum. Thus, we obtain that $1-r\leq a^{-1}-a^{-n+2}$, which contradicts our original assumption \eqref{eq:randa}.
\end{proof}

For the remainder of our construction, we assume that $r$ and $a$ are such that
\eqref{eq:randa} is satisfied. Next, we split the analysis depending on the
allocation of the proper tasks $n+1,\ldots 2n-1$ to machine $1$ on cost matrix $A_1$, as restricted
by~\cref{lem:randa}.

\subsection{Case 1: Machine $1$ gets \emph{all} proper tasks}

In this case, we perform the following changes in machine $1$'s tasks, obtaining a
new cost matrix $B_1$. We increase the cost of dummy task $1$, from $0$ to $1$, and
we decrease the costs of all her proper tasks by an arbitrarily small amount. Notice
that
\begin{itemize}
\item for the mechanism to achieve a finite approximation ratio, it must still
allocate the dummy task 1 to machine 1;
\item given that the mechanism does not change the allocation on dummy task 1, and
that machine 1 only decreases the completion times of her proper tasks, part 2 of
\cref{lem:stv} implies that machine 1 still gets all proper tasks.
\end{itemize}
Thus, the allocation must be as shown below (for ease of exposition, in the cost
matrices that follow we omit the ``arbitrarily small'' amounts by which we change
allocated / unallocated tasks):

\[B_1=\left[\begin{array}{cccccc|ccccc}
\alloc 1 & \infty & \infty & \infty & \cdots & \infty & \alloc r & \alloc a^{-1} & \alloc a^{-2} & \cdots & \alloc a^{-n+2}\\
\infty & \alloc 0 & \infty & \infty & \cdots & \infty & 1 & 1 & a^{-1} & \cdots & a^{-n+3}\\
\infty & \infty & \alloc 0 & \infty & \cdots & \infty &\infty & 1 & \infty & \cdots & \infty\\
\infty & \infty & \infty & \alloc 0 & \cdots & \infty &\infty &\infty & a^{-1} & \cdots & \infty\\
\vdots & \vdots & \vdots & \vdots & \ddots & \vdots & \vdots & \vdots & \vdots & \ddots & \vdots\\
\infty & \infty & \infty & \infty & \cdots & \alloc 0 & \infty & \infty & \infty & \cdots & a^{-n+3}\end{array}\right].\]
This allocation achieves a makespan of $1+r+a^{-1}+\ldots+a^{-n+2}$, while a
makespan of $1$ can be achieved by assigning each proper task $n+j$ to machine
$j+1$. Hence, this case yields an approximation ratio of at least
$1+r+a^{-1}+\ldots+a^{-n+2}$.

\subsection{Case 2: Machine $1$ gets task $n+1$, but does \emph{not} get all proper tasks.}

That is, at least one of tasks $n+2,\ldots 2n-1$ is not assigned to machine $1$.
Suppose that task $n+j$ is the lowest indexed proper task that is not allocated to
her. We decrease the costs of her \emph{allocated} proper tasks $n+1,\ldots,n+j-1$
to $0$, while increasing the cost $a^{-j+1}$ of her (unallocated) proper task $n+j$ by an
arbitrarily small amount. By \cref{lem:stv}, the allocation of machine $1$ on the
proper tasks $n+1,\ldots,n+j$ does not change. Hence we get a cost matrix of the
form

\[B_2=\left[\begin{array}{c|cccccc}
\multirow{7}{*}{\raisebox{-15pt}{$L_n$}}& \alloc 0 & \alloc 0 & \cdots & a^{-j+1} & \cdots & a^{-n+2}\\
& 1 & 1 & \cdots & a^{-j+2} & \cdots & a^{-n+3}\\
& \infty & 1 & \cdots & \infty & \cdots & \infty\\
& \vdots & \vdots & \ddots &\vdots & \cdots & \infty\\
&\infty & \infty & \cdots & a^{-j+2} & \ddots & \infty\\
& \vdots & \vdots & \vdots & \ddots & \ddots & \vdots\\
& \infty & \infty & \infty & \infty & \cdots & a^{-n+3}\end{array}\right].\]

Since task $n+j$ is not allocated to machine $1$, and the mechanism has finite
approximation ratio, it must be allocated to either machine $2$ or machine $j+1$. In
either case, we increase the cost of the dummy task of this machine from $0$ to
$a^{-j+1}$, while decreasing the cost of her proper task $n+j$ by an arbitrarily
small amount. For example, if machine $2$ got task $n+j$, we would end up with

\[C_2=\left[\begin{array}{ccccccc|cccccc}
\alloc 0 & \infty & \infty & \cdots & \infty & \cdots & \infty & 0 & 0 & \cdots & a^{-j+1} & \cdots & a^{-n+2}\\
\infty & \alloc a^{-j+1} & \infty & \cdots & \infty & \cdots & \infty & 1 & 1 & \cdots & \alloc a^{-j+2} & \cdots & a^{-n+3}\\
\infty & \infty & \alloc 0 & \cdots & \infty & \cdots & \infty & \infty & 1 & \cdots & \infty & \cdots & \infty\\
\vdots & \vdots & \vdots & \ddots & \vdots & \ddots & \vdots & \vdots & \vdots & \ddots &\vdots & \ddots & \infty\\
\infty & \infty & \infty & \cdots & \alloc 0 & \cdots & \infty & \infty & \infty & \cdots & a^{-j+2} & \cdots & \infty\\
\vdots & \vdots & \vdots & \ddots & \vdots & \ddots & \vdots & \vdots & \vdots & \ddots & \vdots & \ddots & \vdots\\
\infty & \infty & \infty & \cdots & \infty & \cdots & \alloc 0 & \infty & \infty & \infty & \infty & \cdots & a^{-n+3}\end{array}\right].\]

Similarly to the previous Case~1, the mechanism must still allocate the dummy task
to this machine, and given that the allocation does not change on the dummy task,
\cref{lem:stv} implies that the allocation must also remain unchanged on the proper
task $n+j$. Finally, observe that the present allocation achieves a makespan of at
least $a^{-j+1}+a^{-j+2}$, while a makespan of $a^{-j+1}$ can be achieved by
assigning proper task $n+j$ to machine $1$ and proper task $n+j'$ to machine $j'+1$,
for $j'>j$. Hence, this case yields an approximation ratio of at least
\begin{equation*}
\frac{a^{-j+1}+a^{-j+2}}{a^{-j+1}}=1+a.
\end{equation*}

\subsection{Case 3: Machine 1 does \emph{not} get task $n+1$}

By \cref{lem:randa}, machine $1$ must receive proper task $n+2$. In this case, we
decrease the cost of her task $n+2$, from $a^{-1}$ to $0$, while increasing the
cost $r$ of her (unallocated) task $n+1$ by an arbitrarily small amount.
Since by truthfulness, the allocation of machine $1$ for these two tasks does not change, the allocation
must be as below:

\[B_3=\left[\begin{array}{c|cccccc}
\multirow{6}{*}{\raisebox{-3pt}{$L_n$}}& r & \alloc 0 & a^{-2} & \cdots & a^{-n+2}\\
& \alloc  1 & 1 & a^{-1} & \cdots & a^{-n+3}\\
& \infty & 1 & \infty & \cdots & \infty\\
& \infty & \infty & a^{-1} & \cdots & \infty\\
& \vdots & \vdots & \vdots &\ddots & \vdots\\
& \infty & \infty & \infty & \cdots & a^{-n+3}\end{array}\right].\]

Since task $n+1$ is not allocated to machine $1$, and the mechanism has finite
approximation ratio, it must be allocated to machine $2$. We now increase the cost
of the dummy task of machine $2$ from $0$ to $\max\{r,a^{-1}\}$, while decreasing
the cost of her proper task $n+1$ by an arbitrarily small amount. Similarly to
Cases~1 and 2, the mechanism must still allocate the dummy task to machine $2$, and
preserve the allocation of machine $2$ on the proper task $n+1$. Thus, we get the
allocation shown below:

\[C_3=\left[\begin{array}{cccccc|cccccc}
\alloc 0 & \infty & \infty & \infty & \cdots & \infty & r & 0 & a^{-2} & \cdots & a^{-n+2}\\
\infty & \alloc \max\{r,a^{-1}\} & \infty & \infty & \cdots & \infty & \alloc 1 & 1 & a^{-1} & \cdots & a^{-n+3}\\
\infty & \infty & \alloc 0 & \infty & \cdots & \vdots & \infty & 1 & \infty & \cdots & \infty\\
\infty & \infty & \infty & \alloc 0 & \cdots & \infty & \infty & \infty & a^{-1} & \cdots & \infty\\
\vdots & \vdots & \vdots & \vdots & \ddots & \vdots & \vdots & \vdots & \vdots &\ddots & \vdots\\
\infty & \infty & \infty & \infty & \cdots & \alloc 0 & \infty & \infty & \infty & \cdots & a^{-n+3}\end{array}\right].\]
This allocation achieves a makespan of at least $1+\max\{r,a^{-1}\}$, while a
makespan of $\max\{r,a^{-1}\}$ can be achieved by assigning proper tasks $n+1,n+2$
to machine $1$ and proper task $n+j'$ to machine $j'+1$, for all $j'>2$. Hence, this
case yields an approximation ratio of at least
\[\frac{1+\max\{r,a^{-1}\}}{\max\{r,a^{-1}\}}=1+\min\{r^{-1},a\}.\]

\subsection{Main result}
\label{sec:opt-forumlation}

The three cases considered above give rise to possibly different approximation
ratios; our construction will then yield a lower bound equal to the \emph{smallest}
of these ratios. First notice that Case~3 always gives a worse bound than Case~2:
the approximation ratio for the former is $1+\min\{r^{-1},a\}$, whereas for the
latter it is $1+a$. Thus we only have to consider the minimum between Cases~1 and 3.

Our goal then is to find a choice of $r$ and $a$ that achieves the largest possible
such value. We can formulate this as a nonlinear optimization problem on the
variables $r$ and $a$. To simplify the exposition, we also consider an auxiliary
variable $\rho$, which will be set to the minimum of the approximation ratios:
\[\rho=\min\left\{1+r+a^{-1}+\ldots+a^{-n+2},1+\min\{r^{-1},a\}\right\}=\min\left\{1+r+a^{-1}+\ldots+a^{-n+2},1+r^{-1},1+a\right\}.\]
This can be enforced by the constraints $\rho\leq1+r+a^{-1}+\ldots+a^{-n+2}$,
$\rho\leq1+r^{-1}$ and $\rho\leq1+a$. Thus, our optimization problem becomes

\begin{align}
\sup\quad & \rho \label{eq:nonlinopt}\tag{NLP}\\
\text{s.t.}\quad &\rho \leq 1+r+a^{-1}+\ldots+a^{-n+2}\nonumber\\
&\rho \leq 1+r^{-1}\nonumber\\
&\rho \leq 1+a\nonumber\\
&0 < r < 1 < a\nonumber\\
&1-r > a^{-1}-a^{-n+2}\nonumber
\end{align}

Notice that \emph{any} feasible solution of \eqref{eq:nonlinopt} gives rise to a
lower bound on the approximation ratio of truthful machine scheduling. In our next
lemma, we characterize the limiting optimal solution of the above optimization
problem. Thus, the lower bound achieved corresponds to the best possible lower bound
using the general construction in this paper.

\begin{lemma}
\label{lem:opt-solution}
An optimal solution to the optimization problem given by~\eqref{eq:nonlinopt} is as follows.

\begin{enumerate}
	\item\label{item:opt-solution-small-n} For $n=3,4,5$, choose $\rho=1+a$, $r=\frac{1}{a}$, and $a$ as the positive solution of the equation
	\begin{align*}
	\frac{2}{a}=a,&\quad\text{for}\;\; n=3;\\
	\frac{2}{a}+\frac{1}{a^2}=a,&\quad\text{for}\;\; n=4;\\
	\frac{2}{a}+\frac{1}{a^2}+\frac{1}{a^3}=a,&\quad\text{for}\;\; n=5.
    \end{align*}
	\item\label{item:opt-solution-large-n} For $n\geq 6$, choose $\rho=1+a$, $r=1-\frac{1}{a}+\frac{1}{a^{n-2}}$, and $a$ as the positive solution of the equation
	\begin{equation}
    \label{eq:opt-solution-large-n}
    1+\frac{1}{a^2}+\cdots+\frac{1}{a^{n-3}}+\frac{2}{a^{n-2}}=a.
    \end{equation}
\end{enumerate}
\end{lemma}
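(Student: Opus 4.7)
The plan is to analyze which of the three inequality constraints in \eqref{eq:nonlinopt} are tight at the optimum. Writing $f_1(r,a) = 1 + r + \sum_{k=1}^{n-2} a^{-k}$, $f_2(r) = 1 + r^{-1}$, and $f_3(a) = 1+a$, I would first note that $f_3$ is strictly increasing in $a$, $f_1$ is strictly decreasing in $a$ (for $a > 1$) and strictly increasing in $r$, while $f_2$ is strictly decreasing in $r$. Hence for fixed $r$, $\min\{f_1, f_3\}$ is maximized at the unique $a$ with $f_1 = f_3$, and for fixed $a$, $\min\{f_1, f_2\}$ is maximized at the unique $r$ with $f_1 = f_2$. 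Ignoring the side constraint \eqref{eq:randa} momentarily, we should expect all three of the $\rho$-constraints to be simultaneously tight, which forces $r^{-1} = a$, i.e., $r = 1/a$, and then $f_1 = f_3$ reduces precisely to $\frac{2}{a} + \frac{1}{a^2} + \cdots + \frac{1}{a^{n-2}} = a$.

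With this candidate in hand, I would then verify whether \eqref{eq:randa} holds: plugging in $r = 1/a$, it simplifies to the polynomial inequality $a^{n-2} - 2a^{n-3} + 1 > 0$. For $n=3$ this reduces to $a > 1$ and for $n=4$ to $(a-1)^2 > 0$, both trivially true; for $n=5$, a direct algebraic check at the positive root of $\frac{2}{a} + \frac{1}{a^2} + \frac{1}{a^3} = a$ confirms it. Combined with uniqueness of each root (the left-hand side is strictly decreasing in $a$ on $(1,\infty)$, so there is a unique solution greater than one), this settles \cref{item:opt-solution-small-n}.

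For $n \geq 6$, I would first show that $a^{n-2} - 2a^{n-3} + 1$ becomes non-positive at the root of $\frac{2}{a} + \cdots + \frac{1}{a^{n-2}} = a$, so the $r = 1/a$ candidate violates \eqref{eq:randa}. Consequently the supremum must be approached with \eqref{eq:randa} tight in the limit, i.e., $r = 1 - 1/a + 1/a^{n-2}$. Substituting this into $f_1 = f_3$ and telescoping (the $1/a$ terms cancel and $1/a^{n-2}$ doubles) yields exactly \eqref{eq:opt-solution-large-n}. I would then check that the remaining constraint $f_2 \geq f_3$ is still satisfied at this point — equivalently, $r \leq 1/a$, i.e., $a^{n-2} - 2a^{n-3} + 1 \leq 0$ at the root of \eqref{eq:opt-solution-large-n} — and that $0 < r < 1 < a$, both of which reduce to polynomial checks at that root.

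The main obstacle I anticipate is rigorously controlling the sign of the polynomial $a^{n-2} - 2a^{n-3} + 1$ at the relevant roots, since this is precisely what governs the crossover between \cref{item:opt-solution-small-n} and \cref{item:opt-solution-large-n}. Everything else — the monotonicities of $f_1, f_2, f_3$ that rule out any other configuration of tight constraints producing a larger $\rho$, and the uniqueness of the roots of the polynomial equations — is routine once this sign analysis is secured.
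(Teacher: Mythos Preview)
Your approach is correct and lands on the same equations as the paper, but the organization is genuinely different and worth contrasting.

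The paper does not argue ``all three constraints tight at the unconstrained optimum, then check \eqref{eq:randa}.'' Instead it (i) passes to the closure of the feasible region to turn the $\sup$ into a $\max$; (ii) proves first, by an explicit perturbation argument, that the constraint $\rho\le 1+a$ (your $f_3$) is tight at \emph{any} optimum; (iii) eliminates $r$ by setting $r=\min\{a^{-1},\,1-a^{-1}+a^{-n+2}\}$, reducing to a one-variable problem in $a$ with two candidate roots $a_{n,1},a_{n,2}$ coming from your $f_1=f_3$ and from the boundary of \eqref{eq:randa} respectively; and (iv) shows the optimum is $\min\{a_{n,1},a_{n,2}\}$, then compares the two sequences. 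Your dichotomy ``the $r=1/a$ candidate is feasible for \eqref{eq:randa} or not'' is exactly the condition $a_{n,1}\lessgtr a_{n,2}$, so the two routes are equivalent at the core.

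Two points where your outline is thinner than the paper and you should be careful:
\begin{itemize}
\item Your ``we should expect all three tight'' is heuristic. The paper's step (ii) is a genuine argument (a three-case perturbation) showing $\rho=1+a$; you would need something comparable, e.g.\ a clean KKT/coordinate-ascent argument that rules out optima with only two of the $f_i$ tight and none of the box constraints active.
\item For the crossover at $n=5$ versus $n\ge 6$ (your ``main obstacle''), the paper uses the neat identity $a_{\infty,2}=a_{6,1}$ together with monotonicity of both sequences to handle all $n\ge 6$ at once. Your direct sign analysis also goes through: using the defining relation $a=2a^{-1}+a^{-2}+\cdots+a^{-n+2}$ one can derive
\[
a_{n,1}^{\,n-2}-2a_{n,1}^{\,n-3}+1 \;=\; -\,a_{n,1}^{\,n-3}\,(a_{n,1}-1)\,(a_{n,1}^2-3),
\]
so the sign question collapses to whether $a_{n,1}\lessgtr\sqrt{3}$, which is easy to check at $n=5$ and $n=6$ and then follows for all $n\ge 6$ by monotonicity of $a_{n,1}$. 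This is arguably as clean as the paper's trick; just make sure you actually write it down rather than leaving it as ``a polynomial check.'' Note also that your second verification for $n\ge 6$ (that $f_2\ge f_3$ at the boundary candidate) is \emph{the same} inequality $a_{n,2}\le a_{n,1}$, so you only need to prove it once.
\end{itemize}
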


We defer the (admittedly technical) proof of \cref{lem:opt-solution} to \cref{sec:nonlinopt} below; for the time being, we show how this lemma allows us to prove our main result.

\begin{theorem}
\label{th:main-lower-bound}
No deterministic truthful mechanism for unrelated machine scheduling can have an approximation ratio better than $\rho\approx 2.755$, where $\rho$ is the (unique real) solution of equation 
\begin{equation}
\label{eq:lower-bound-infinity}
(\rho-1)(\rho -2)^2 = 1.
\end{equation}
For a restricted number of machines the lower bounds can be seen in~\cref{table:lower-bounds}.
\end{theorem}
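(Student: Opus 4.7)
The plan is to combine the case analysis of \cref{sec:lower-bound} with \cref{lem:opt-solution}. First I would fix $n\geq 3$ and observe that Cases~1, 2, 3 (together with \cref{lem:randa}) show that for any parameters $(r,a)$ satisfying $a>1>r>0$ and \eqref{eq:randa}, every truthful mechanism on our family of instances is forced to incur one of the three approximation guarantees established there. As already noted just before \eqref{eq:nonlinopt}, Case~3's bound $1+\min\{r^{-1},a\}$ is dominated by Case~2's bound $1+a$ (since $r<1$ implies $r^{-1}>1$, so $1+\min\{r^{-1},a\}\leq 1+a$), so only Cases~1 and~3 remain relevant. Hence any feasible $(r,a,\rho)$ of \eqref{eq:nonlinopt} yields a valid lower bound, and taking the supremum over all feasible points gives the strongest bound achievable by this construction.

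Second, I would invoke \cref{lem:opt-solution} to get the explicit optimum of \eqref{eq:nonlinopt} for each $n$. For $n\in\{3,4,5\}$ the optimal $a$ solves the three displayed polynomial equations in part~\ref{item:opt-solution-small-n}, and $\rho=1+a$ can then be computed numerically, yielding the values $2.414$, $2.618$, $2.711$ in \cref{table:lower-bounds}. For $n\geq 6$, part~\ref{item:opt-solution-large-n} gives $\rho=1+a$ where $a>1$ is the positive root of \eqref{eq:opt-solution-large-n}, and evaluating this numerically yields the remaining entries $2.739$, $2.746$, $2.750$, \ldots. The routine numerical computations here are straightforward and I would simply quote them.

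Third, for the headline $n\to\infty$ bound, I would take the limit of \eqref{eq:opt-solution-large-n}. Writing the equation as
\[
1+\sum_{k=2}^{n-3}\frac{1}{a^k}+\frac{2}{a^{n-2}}=a,
\]
and noting that (on any bounded set away from $a=1$) the tail $\frac{2}{a^{n-2}}\to 0$ and the partial geometric sum converges to $\sum_{k=2}^{\infty}a^{-k}=\frac{1}{a(a-1)}$, the limiting equation becomes $1+\frac{1}{a(a-1)}=a$. Multiplying by $a(a-1)$ and rearranging gives $a^3-2a^2+a-1=0$. Substituting $a=\rho-1$ and expanding yields exactly $(\rho-1)(\rho-2)^2=1$, whose unique real root is $\rho\approx 2.755$. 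Standard continuity/monotonicity (the left-hand side of the displayed equation is continuous and strictly decreasing in $a>1$ for each fixed $n$, and converges uniformly on compact subsets of $(1,\infty)$ to its $n=\infty$ limit) guarantees that the roots $a_n$ converge to $a_\infty$, so that the discrete bounds $\rho_n=1+a_n$ indeed converge upward to $\rho$.

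The main obstacle, apart from trusting \cref{lem:opt-solution}, is making the limit argument rigorous — i.e., justifying that one may pass from the solutions $a_n$ of \eqref{eq:opt-solution-large-n} to the solution of the cubic. This is a routine implicit-function / monotonicity argument, but it is what converts the sequence of pointwise lower bounds into the clean asymptotic statement $(\rho-1)(\rho-2)^2=1$.
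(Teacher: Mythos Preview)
Your proposal is correct and follows essentially the same route as the paper: combine the three-case analysis with \cref{lem:opt-solution}, solve the resulting equations numerically for small $n$, and for the asymptotic bound let $n\to\infty$ in \eqref{eq:opt-solution-large-n} to obtain $1+\frac{1}{a(a-1)}=a$, then substitute $a=\rho-1$ to reach $(\rho-1)(\rho-2)^2=1$. Your extra care in justifying the convergence of the roots $a_n$ via monotonicity and uniform convergence is a mild elaboration of what the paper leaves implicit, but not a different argument.
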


\begin{proof}
For $n$ large enough  we can use Case~\ref{item:opt-solution-large-n}
    of~\cref{lem:opt-solution}. In particular, taking the limit
    of~\eqref{eq:opt-solution-large-n} as $n\rightarrow\infty$, we can ensure a
    lower bound of $\rho=a+1$, where $a$ is the (unique) real solution of equation
    $$1+\sum_{i=2}^\infty\frac{1}{a^i} = 1+\frac{1}{a(a-1)} = a.$$ Performing the
    transformation $a=\rho -1$, and multiplying throughout by $(\rho-1)(\rho-2)$, we get exactly~\eqref{eq:lower-bound-infinity}.

For a fixed number of machines $n$, we can directly solve the equations given by
either Case~\ref{item:opt-solution-small-n} ($n=3,4,5$) or
Case~\ref{item:opt-solution-large-n} of~\cref{lem:opt-solution} to derive the
corresponding value of $a$, for a lower bound of $\rho=a+1$. In particular, for
$n=3,4,5$ one gets $a=\sqrt{2}\approx 1.414$, $a=\phi\approx 1.618$ (i.e., the
\emph{golden ratio}) and $a\approx 1.711$, respectively. The values of $\rho$ for up to
$n=8$ machines are given in~\cref{table:lower-bounds}.
\end{proof}

\subsection{Proof of \texorpdfstring{\cref{lem:opt-solution}}{Lemma~3}}\label{sec:nonlinopt}

For the remainder of the paper we focus on proving \cref{lem:opt-solution}, that is,
we characterize the limiting optimal solution of \eqref{eq:nonlinopt}. We begin by
introducing a new variable $z=a^{-1}$, and restate the problem in terms of
$r,z,\rho$.

\begin{align}
\sup\quad & \rho \label{eq:prob2} \\ 
\text{s.t.}\quad &\rho \leq 1+r+z+\ldots+z^{n-2}\nonumber\\
&\rho \leq 1+r^{-1}\nonumber\\
&\rho \leq 1+z^{-1}\nonumber\\
&0 < r , z < 1\nonumber\\
&r < 1 - z + z^{n-2}\nonumber
\end{align}

Notice that the function
$(r,z)\mapsto\min\{1+r+z+\ldots+z^{n-2},1+r^{-1},1+z^{-1}\}$, defined in the
feasibility domain $D=\{(r,z)\;:\;0<r,z<1\text{ and }r<1-z+z^{n-2}\}$, has a
continuous extension to the closure $\bar{D}=
\{(r,z)\;:\;0\leq r,z\leq 1\text{ and }r\leq 1-z+z^{n-2}\}$, which is a compact set.
By the extreme value theorem, the continuous extension must achieve its supremum at
some point in $\bar{D}$; that is to say, the supremum of \eqref{eq:prob2}
corresponds to the maximum of the relaxed problem,

\begin{align}
\max\quad & \rho \nonumber \\ 
\text{s.t.}\quad &\rho \leq 1+r+z+\ldots+z^{n-2}\nonumber\\
&\rho \leq 1+r^{-1}\nonumber\\
&\rho \leq 1+z^{-1}\nonumber\\
&0 \leq r , z \leq 1\nonumber\\
&r \leq 1 - z + z^{n-2}\label{eq:prob3cons5}
\end{align}
which always exist.

Let $(r,z,\rho)$ be an optimal solution. Our next step is to prove that
$\rho=1+z^{-1}$. Suppose otherwise; then, since
$\rho=\min\{1+r+z+\ldots+z^{n-2},1+r^{-1},1+z^{-1}\}$, one must have that either
\begin{equation}\label{eq:onepluszinvnottight}1+r+z+\ldots+z^{n-2}<1+z^{-1}\qquad\text{or}\qquad1+r^{-1}<1+z^{-1}.\end{equation}

We will show that, under such circumstances, we could find a perturbed
$(\tilde{r},\tilde{z},\tilde{\rho})$ with a strictly better objective value, thus
yielding a contradiction. Our analysis proceeds in three cases.

\textbf{Case 1}: $r=0$. This implies that $1+r^{-1}=\infty$, and thus
$\rho=1+r+z+\ldots+z^{n-2}<1+z^{-1}\leq 1+r^{-1}$ Also, since $1-z+z^{n-2}>0$ for
$0\leq z\leq 1$, \eqref{eq:prob3cons5} is not tight, that is to say,
$r<1-z+z^{n-2}$. Thus, we can increase $r$ by an arbitrarily small $\varepsilon>0$,
thus yielding a feasible solution $(r+\varepsilon,z,\rho+\varepsilon)$ with a
strictly better objective value.

\textbf{Case 2}: $r>0$ and $z=1$. This cannot occur, since it would imply both
\[1+r+z+\ldots+z^{n-2}\geq 1+z^{-1}\qquad\text{and}\qquad1+r^{-1}\geq 1+z^{-1},\]
which would contradict \eqref{eq:onepluszinvnottight}.

\textbf{Case 3}: $r>0$ and $z<1$. Take $\varepsilon>0$ sufficiently small and
perturb $(r,z)$ to a new pair $(r-\varepsilon,z+\varepsilon)$, so that
$r-\varepsilon>0$, $z+\varepsilon<1$, and \eqref{eq:onepluszinvnottight} remains
valid. Notice that, under this perturbation, $r$ decreases, $z$ increases, and $r+z$
remains constant. Hence, we do not leave the feasibility region; in particular,
\eqref{eq:prob3cons5} can be written as $r+z\leq 1+z^{n-2}$, and this inequality can
only remain valid after the perturbation. Finally, the perturbation increases both
left-hand sides and decreases both right-hand sides of
\eqref{eq:onepluszinvnottight}. Therefore, the perturbed $(\tilde{r},\tilde{z})$
gives rise to a strictly better objective value.

We have thus deduced that $\rho=1+z^{-1}$ in an optimal solution. This allows us to
restate the optimization problem,
\begin{align*}
\max\quad & 1+z^{-1}\\ 
\text{s.t.}\quad &1+z^{-1} \leq 1+r+z+\ldots+z^{n-2}\\
&1+z^{-1} \leq 1+r^{-1}\\
&0\leq r, z\leq 1\\
&r\leq 1-z+z^{n-2}
\end{align*}
Further rearranging, and removing unnecessary inequalities, yields
\begin{align*}
\max\quad & 1+z^{-1}\\ 
\text{s.t.}\quad &r\geq z^{-1}-z-\ldots-z^{n-2}\\
&r\geq 0\\
&r\leq z\\
&r\leq 1-z+z^{n-2}\\
&0\leq z\leq 1
\end{align*}
Next observe that we can remove the dependency on $r$ by setting
$r=\min\{z,1-z+z^{n-2}\}$, as long as a feasible choice of $r$ exists. Thus, we end
up with
\begin{align}
\max\quad & 1+z^{-1}\nonumber\\ 
\text{s.t.}\quad &z^{-1}-z-\ldots-z^{n-2}\leq z\label{eq:prob6const1}\\
&z^{-1}-z-\ldots-z^{n-2}\leq 1-z+z^{n-2}\label{eq:prob6const2}\\
&0\leq z\leq 1\nonumber\\
&0\leq 1-z+z^{n-2}\label{eq:prob6const4}
\end{align}

Notice that \eqref{eq:prob6const4} is redundant from $0\leq z\leq 1$, and can be
removed. Also, we can rewrite \eqref{eq:prob6const1} and \eqref{eq:prob6const2} as
\[z^{-1}\leq 2z+z^{2}+\ldots+z^{n-3}+z^{n-2},\qquad\qquad z^{-1}\leq
1+z^{2}+\ldots+z^{n-3}+2z^{n-2}.\] In both of the above inequalities, the left hand
side is decreasing in $z$, from $\infty$ as $z\rightarrow 0$ to $1$ at $z=1$,
whereas the right hand side is increasing in $z$, from either $0$ or $1$ at $z=0$ to
$n-1$ at $z=1$. Hence, there are unique positive solutions $z_{n,1}$, $z_{n,2}$ to
the equations
\begin{align*} 
z^{-1}&=2z+z^{2}+\ldots+z^{n-3}+z^{n-2};\\
z^{-1}&=1+z^{2}+\ldots+z^{n-3}+2z^{n-2};
\end{align*}
and moreover, \eqref{eq:prob6const1}, \eqref{eq:prob6const2} are equivalent to
$z\geq z_{n,1}$, $z\geq z_{n,2}$ respectively. Since $z=1$ is a valid feasible
point, we also get that $0<z_{n,1},z_{n,2}<1$. Since our goal is to maximize
$1+z^{-1}$, this is obtained by taking $z$ to be the maximum of $z_{n,1}$,
$z_{n,2}$.

We can finally convert back to $a=z^{-1}$. Since $0<z<1$, $1<a<\infty$. We recover
$r$ via $r=\min\{z,1-z+z^{n-2}\}=\min\{a^{-1},1-a^{-1}+a^{-n+2}\}$. Also, the
reciprocals of $z_{n,1}$, $z_{n,2}$ correspond to the unique positive solutions
$a_{n,1}$, $a_{n,2}$ to the equations
\begin{align}
a&=2a^{-1}+a^{-2}+\ldots+a^{-n+2};\label{eq:an1}\\
a&=1+a^{-2}+\ldots+a^{-n+3}+2a^{-n+2};\label{eq:an2}\end{align}
and the maximum of $z_{n,1}$, $z_{n,2}$ corresponds to the minimum of $a_{n,1}$, $a_{n,2}$.

Now, for $n=3,4,5$, one can numerically check that $a_{n,1}<a_{n,2}$: we have
\begin{center}\begin{tabular}{ccc}
$a_{3,1}\approx 1.414$ & $a_{4,1}\approx 1.618$ & $a_{5,1}\approx 1.711$\\
$a_{3,2}\approx 1.618$ & $a_{4,2}\approx 1.696$ & $a_{5,2}\approx
1.725$\end{tabular}\end{center} Thus, for $n=3,4,5$, the optimal solution
corresponds to taking $a$ such that $a=2a^{-1}+a^{-2}+\ldots+a^{-n+2}$; and
therefore, \eqref{eq:prob6const1} is tight, so that $r=a^{-1}$. On the other hand,
for $n=6$, we have $a_{n,1}\approx1.755>1.739\approx a_{n,2}$; and moreover, as we
increment $n$, the right hand side of \eqref{eq:an1} increases by an extra term
$a^{-n+2}$ whereas the right hand side of \eqref{eq:an2} increases by
$2a^{-n+2}-a^{-n+3}=a^{-n+2}(2-a)$, which is nonnegative: by plugging $a=2$ in
\eqref{eq:an2} we see that $a_{n,2}<2$. Hence, the sequences $a_{n,1}$ and $a_{n,2}$
are both increasing, and in particular $a_{n,2}$ converges to some value
$a_{\infty,2}$ which is the solution of
\[a=1+\sum_{i=2}^\infty\frac{1}{a^i} = 1+\frac{1}{a(a-1)}.\]
We can directly check that $a_{\infty,2}=a_{6,1}\approx 1.755$, by comparing the respective equations:
\begin{align*}
a_{\infty,2}=1+\frac{1}{a_{\infty,2}(a_{\infty,2}-1)}&\Rightarrow
a_{\infty,2}(a_{\infty,2}-1)^2=1\\ &\Rightarrow
a_{\infty,2}^3-2a_{\infty,2}^2+a_{\infty,2}-1=0;\\
a_{6,1}=2a_{6,1}^{-1}+a_{6,1}^{-2}+a_{6,1}^{-3}+a_{6,1}^{-4}&\Rightarrow
a_{6,1}^{5}-2a_{6,1}^{3}-a_{6,1}^{2}-a_{6,1}=1\\ &\Rightarrow
(a_{6,1}^3-2a_{6,1}^2+a_{6,1}-1)(a_{6,1}+1)^2=0.\end{align*} Thus, for $n\geq 6$, we
have that $a_{n,2}<a_{\infty,2}=a_{6,1}\leq a_{n,1}$. We conclude that the optimal
solution corresponds to taking $a$ such that $a=1+a^{-2}+\ldots+a^{-n+3}+2a^{-n+2}$;
this means that \eqref{eq:prob6const2} is tight, so that $r=1-a^{-1}+a^{-n+2}$. This
finishes the proof.

\bibliography{scheduling_new_lower_bound}
\end{document}